\newcommand{\nop}[1]{}
\newcommand{\nop}[1]{}
\newtheorem{definition}{Definition}
\newtheorem{theorem}{Theorem}
\newtheorem{corollary}{Corollary}
\newtheorem{lemma}{Lemma}
\begin{document}

\title{A Basic Result on the Superposition of Arrival Processes in Deterministic Networks}   
\author{Yuming Jiang \\NTNU, Norwegian University of Science and Technology,~Norway}        
\date{\today}          
\maketitle

\begin{abstract}
Time-Sensitive Networking (TSN) and Deterministic Networking (DetNet) are emerging standards to enable deterministic, delay-critical communication in such networks. This naturally (re-)calls attention to the network calculus theory (NC), since a rich set of results for delay guarantee analysis have already been developed there. One could anticipate an immediate adoption of those existing network calculus results to TSN and DetNet. However, the fundamental difference between the traffic specification adopted in TSN and DetNet and those traffic models in NC makes this difficult, let alone that there is a long-standing open challenge in NC. To address them, this paper considers an arrival time function based max-plus NC traffic model. 
In particular, a relationship between the TSN / DetNet traffic specification and the NC traffic model is proved. In addition, the superposition property of the arrival time function based NC traffic model is found and proved. 
Appealingly, the proved superposition property shows a clear analogy with that of a well-known counterpart traffic model in NC. These results help make an important step towards the development of a system theory for delay guarantee analysis of TSN / DetNet networks.

\keywords Time-Sensitive Networking (TSN); Deterministic Networking (DetNet); Network Calculus; Max-Plus Network Calculus; Superposition Property; Max-Plus Arrival Curve
\end{abstract}

\section{Introduction} \label{sec-1}

Time-sensitive applications can be broadly found in industrial process control,  machine control and live streaming of audio and video. To support such applications and to enable deterministic delay-critical communication, Time-Sensitive Networking (TSN) and Deterministic Networking (DetNet) are emerging standards respectively introduced by the IEEE TSN Task Group \cite{TSN-802.1Qcc} for Layer 2 Ethernet switches, and by the IETF DetNet Working Group \cite{IETF-DetNet} for more general network settings. In both TSN and DetNet, the traffic specification (TSpec) uses two parameters to model a flow or an arrival process: {\em a time interval} and {\em the maximum number of packets in the interval} \cite{TSN-802.1Qcc}\cite{IETF-DetNet}. 

Since a rich set of results for delay guarantee analysis have already been developed in the network calculus theory (NC), e.g. \cite{Cruz91a}\cite{Cruz91b}\cite{Chang00}\cite{NetCal}\cite{Sigcomm06}\cite{SNC}\cite{Jiang-GL09}\cite{Jiang-mascots09}\cite{Jiang-valuetools11}\cite{Liebeherr17}, one could anticipate an immediate adoption of those existing NC results to TSN and DetNet. However, the traffic specification adopted in TSN and DetNet is fundamentally different from those traffic models in NC, which makes the adoption difficult, let alone that there is a long-standing open challenge in the related part of NC. To address this difficulty and the open challenge forms the motivation and objective of the present paper. 

Specifically, in this paper, a packet arrival time function based traffic model related to the max-plus branch of NC \cite{Chang00}\cite{Jiang-GL09}\cite{Jiang-mascots09}\cite{Jiang-valuetools11}\cite{Liebeherr17} is introduced. We prove that there is a mapping between the TSN / DetNet traffic specification and the max-plus traffic model, which establishes an important link for making use of NC results to TSN / DetNet analysis. 

However, for this max-plus traffic model, there is a long-standing problem, which is its superposition property, i.e., the aggregate arrival process resulted from the aggregation of multiple arrival processes can be characterized using the same model as for the individual arrival processes. Specifically, the superposition property has surprisingly not been found or proved directly on the model itself for long time due to an inherent challenge \cite{Chang00}\cite{Jiang-GL09}\cite{Jiang-mascots09}\cite{Jiang-valuetools11}\cite{Liebeherr17}. Since the superposition property is one of the most basic properties needed for network performance analysis \cite{Sigcomm06}  \cite{SNC}, this calls for an urgent need of investigation. 

The inherent challenge \cite{Jiang-mascots09}\cite{Liebeherr17} is due to the complex formulation of the arrival time function of the aggregate process in terms of the arrival time functions of the individual arrival processes, making it difficult (if not impossible) to characterize the aggregate process directly on this aggregate arrival time function. 
To bypass this challenge, an indirect approach has been considered in the literature \cite{Chang00}\cite{Jiang-mascots09}.\nop{, which is to first transform the time-domain arrival characterization from the arrival time function to a model in another domain, then apply the superposition property to obtain a characterization for the aggregate process in that domain, and finally transform the obtained characterization back to the arrival time function characterization.} However, {\em this indirect approach requires packet length information \cite{Chang00}\cite{Jiang-mascots09}\cite{Liebeherr17}, which is not available or needed} in the arrival time function description of the arrival process as is the case in TSN and DetNet. 

In this paper, a novel approach is used which works directly on the arrival time functions, fundamentally different from the indirect approach. Based on this direct approach, the superposition property of the arrival time function based max-plus traffic model is found and proved. Appealingly, the proved superposition property has a clear analogy with the aggregation property of the well-known $(\sigma, \rho)$ traffic model  \cite{Cruz91a} in the min-plus branch of NC  \cite{Cruz91a}\cite{NetCal}, and is (much) better than that from the indirect approach. The superposition property and its proof using the direct approach form another contribution that is crucial to both NC and the future use of NC results to TSN and Det Net. 

The rest is organized as follows. In Sec. \ref{sec-2}, the max-plus traffic model is introduced, together with the proof of the mapping between it and the TSN / DetNet TSpec. In Sec. \ref{sec-3}, the inherent challenge is first discussed, followed by the superposition property with detailed proof. In Sec. \ref{sec-4}, a comparison study of  results using the indirect approach and the direct approach is provided. This comparison implies the importance of the superposition property proved in this paper. Finally, concluding remarks are made in Sec. \ref{sec-5}.

\section{The Max-Plus Traffic Model and The Mapping}\label{sec-2}

\subsection{Notation}
An arrival process is characterized by the arrival time function $\bar{A}(n)$, for $n=1, 2, \dots$, where $\bar{A}(n)$ denotes the arrival time of packet $n$. For notational convenience, we define $\bar{A}(0) = 0$. In addition, we define $\bar{A}(m, n) = \bar{A}(n) - \bar{A}(m)$ to be the inter-arrival time between the arrivals of packet $m$ and packet $n$, for $n \ge m \ge 1$. For instance, $\bar{A}(n, n + 1)$ is the inter-arrival time between packets $n$ and $n + 1$ for $n \ge 1$. 

As an analogy, we also characterize the arrival process using another function $A(t)$, $t \ge 0$, which counts the cumulative amount of traffic (in bits) carried by the arrival process up to time $t$. Similarly, we define $A(s, t) \equiv A(t) - A(s)$ as the cumulative amount of traffic carried by the arrival process from time $s$ to $t$, and for notational convenience, we let ${A}(0) = 0$. 

When studying the superposition of $I (\ge 2)$ multiple arrival processes, we use $\bar{A}_i(n)$, $(i=1, \dots, I)$, to denote the arrival time function of each individual arrival process, and $\bar{A}(n)$ that of the aggregate process. In addition,  we use ${A}_i(t)$, $(i=1, \dots, I)$, to denote the cumulative traffic amount time function of each individual arrival process, and $A(t)$ that of the aggregate process. 

\subsection{The TSN / DetNet Traffic Specification}

The TSN / DetNet traffic specification is defined as \cite{TSN-802.1Qcc}\cite{IETF-DetNet}:
\begin{definition}\label{def-tspec}
An arrival process is said to conform to the TSN / DetNet traffic specification with interval parameter $\tau (>0)$ and maximum packet number parameter $K (\ge 1)$, if during a specified duration of length $\tau$, the number of packets generated by this arrival process is limited by $K$. 
\end{definition}

For Definition~\ref{def-tspec}\nop{ this TSN / DetNet traffic specification}, we have the following remarks. First, this specification aims to characterize flows at the packet level. We believe, there is an underlying reason for this. In particular, the delay of a packet at a network node is comprised of two types of delays, namely processing related delays, and transmission related delays. \nop{The former includes the processing delay and the queueing delay due to waiting for processing. The latter includes the transmission delay and the queueing delay waiting for transmission.}Typically, delays in the first category are affected only at the packet level, little by the packet length, unlike the delays in the second category. With the link speed enters Gbps range, the nodal packet delay becomes more and more dominated by the first category, for which packet level characterization is crucial. 

Second, in \cite{TSN-802.1Qcc}\cite{IETF-DetNet}, there is a maximum packet length parameter that could also be included in the TSpec. However, by convention, the maximum packet length of a flow or arrival process typically does not change in the network. For this reason as well as the discussion above, the maximum packet length parameter is not included in Definition~\ref{def-tspec}. 

Third, for flows characterized by this TSpec, few results are available for their delay guarantee analysis. 
On the contrary, a rich set of such results have already been developed in NC, e.g. \cite{Cruz91a}\cite{Cruz91b}\cite{Chang00}\cite{NetCal}\cite{Sigcomm06}\cite{SNC}\cite{Jiang-GL09}\cite{Jiang-mascots09}\cite{Jiang-valuetools11}\cite{Liebeherr17}. So, an idea is to find a way to link TSN / DetNet TSpec to traffic models in NC, though this traffic specification is fundamentally different. 

In the following, we introduce a traffic model that is related to NC, and prove its relationship with the TSN / DetNet TSpec. 

\subsection{The Max-Plus Traffic Model and the Mapping}

In this paper, we introduce the following traffic model. 
\begin{definition}
An arrival process\nop{, characterized by $\bar{A}(n)$,} is said to be $(\lambda,  \nu)$-constrained, if, for all $n \ge m \ge 0$, there holds
$$
\bar{A}(m, n) \ge \frac{1}{\lambda}  (n-m - \nu)^{+}
$$
where $(x)^{+} \equiv \max\{x, 0 \}$ and $\lambda (> 0)$ and $\nu (\ge 0)$ are two constant parameters. 
\end{definition}

As the definition shows, the  $(\lambda,  \nu)$ model is defined on the arrival time function. Indeed, it is a special case of the max-plus arrival curve model defined for the max-plus network calculus  \cite{Chang00} \cite{Jiang-GL09} \cite{Jiang-mascots09}, where a more general function, called max-plus arrival curve, is used as the constraint function. 

The following lemma shows that, the definition of the $(\lambda,  \nu)$ model is equivalent to an expression in the max-plus algebra, and is hence referred to as {\em a max-plus traffic model}. The proof is similar to that for the general max-plus arrival curve model in Lemma 5.2 in \cite{Jiang-mascots09} and omitted.

\begin{lemma}
If an arrival process is $(\lambda,  \nu)$-constrained, if and only if, there holds
$$
\bar{A}(n) \ge \bar{A} \bar{\otimes} \bar{\alpha}(n)
$$
where $\bar{\alpha}(n) = \frac{1}{\lambda}  (n-m - \nu)^{+}$, and the operation $\bar{\otimes}$ of two functions $F(n)$ and $G(n)$ is the max-plus convolution, defined as $F \bar{\otimes} G (n) \equiv \sup_{0 \le m \le n} \{F(m) + G(n-m)\}$. 
\end{lemma}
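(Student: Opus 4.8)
The plan is to prove the equivalence by directly unfolding the definition of the max-plus convolution $\bar{\otimes}$ and reducing the stated inequality to the elementary fact that a single value dominates a supremum if and only if it dominates every member of the indexed family. Writing out the right-hand side (and reading $\bar{\alpha}(k) = \frac{1}{\lambda}(k-\nu)^{+}$), the claimed condition $\bar{A}(n) \ge \bar{A}\,\bar{\otimes}\,\bar{\alpha}(n)$ becomes
\[
\bar{A}(n) \ge \sup_{0 \le m \le n}\left\{\bar{A}(m) + \tfrac{1}{\lambda}\big((n-m)-\nu\big)^{+}\right\},
\]
so the entire argument rests on recognizing that this supremum-form inequality and the family of pointwise inequalities indexed by $m$ carry exactly the same information.

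For the forward direction I would assume the process is $(\lambda,\nu)$-constrained, i.e. $\bar{A}(n)-\bar{A}(m) \ge \tfrac{1}{\lambda}(n-m-\nu)^{+}$ for every pair $n \ge m \ge 0$. Fixing $n$ and rearranging gives $\bar{A}(n) \ge \bar{A}(m) + \bar{\alpha}(n-m)$ for each admissible $m$, so $\bar{A}(n)$ is an upper bound of the set $\{\bar{A}(m)+\bar{\alpha}(n-m) : 0 \le m \le n\}$; since any upper bound of a set also bounds its supremum, $\bar{A}(n) \ge \bar{A}\,\bar{\otimes}\,\bar{\alpha}(n)$ follows immediately.

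For the converse I would assume $\bar{A}(n) \ge \bar{A}\,\bar{\otimes}\,\bar{\alpha}(n)$ for all $n$ and simply select, for any fixed $m$ with $0 \le m \le n$, that single index inside the supremum: since the supremum dominates each of its terms, $\bar{A}(n) \ge \bar{A}(m)+\bar{\alpha}(n-m)$, which rearranges back to the $(\lambda,\nu)$ constraint. A minor observation worth recording is that taking $m=n$ gives $\bar{\alpha}(0)=\tfrac{1}{\lambda}(-\nu)^{+}=0$ (using $\nu \ge 0$), so $\bar{A}\,\bar{\otimes}\,\bar{\alpha}(n) \ge \bar{A}(n)$ holds unconditionally; hence the inequality in the lemma is in fact an equality, confirming that no information is lost in passing to the convolution formulation.

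Honestly there is no substantive obstacle here. The only points requiring care are the supremum-versus-pointwise equivalence (trivial once stated) and the bookkeeping of the index ranges, namely ensuring that the pairs $(m,n)$ appearing in the constraint match the convolution indices and that the endpoint $m=n$ is included. The statement is the discrete max-plus counterpart of the classical min-plus arrival-curve characterization, so I would model the write-up on that standard proof and keep it brief.
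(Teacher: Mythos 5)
Your proof is correct and is exactly the standard unfolding argument (a value dominates a supremum iff it dominates every term of the indexed family), which is the same route the paper takes by deferring to Lemma 5.2 of the cited max-plus arrival-curve work; your reading of $\bar{\alpha}(k)=\frac{1}{\lambda}(k-\nu)^{+}$ correctly repairs the stray $m$ in the lemma's statement. The side remark that the inequality is actually an equality (via the $m=n$ term and $\bar{\alpha}(0)=0$) is a nice, accurate bonus.
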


The following theorem establishes a relationship\nop{ the mapping} between the TSN/DetNet TSpec and the $(\lambda,  \nu)$ model.
\begin{theorem}\label{th-0}
(i) If an arrival process is $(\lambda,  \nu)$-constrained, it conforms to the TSN / DetNet traffic specification with 
\begin{itemize}
\item[(a)] interval parameter $\tau=j/\lambda$ and maximum packet number parameter $K =\lceil \nu \rceil +j+1$, or 
\item[(b)] interval parameter $\tau={(j/\lambda)}_{-}$ and maximum packet number parameter $K =\lceil \nu \rceil+j$, 
\end{itemize}
for any integer $j \ge 1$, where ${x}_{-}$ denotes $x-\epsilon$ for $\epsilon \to 0$. \\
(ii) If an arrival process conforms to the TSN / DetNet traffic specification with parameters $\tau$ and $K (\ge 1)$, it is $(\lambda,  \nu)$-constrained with $\lambda=K/\tau$ and $\nu=K-1$.
\end{theorem}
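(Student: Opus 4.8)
The plan is to prove the two directions separately, and in each case to translate between the two equivalent pieces of bookkeeping: a statement about the inter-arrival time $\bar{A}(m,n)$ between packets whose indices differ by $n-m$, and a statement about the number of packets whose arrival times fall into a common window of length $\tau$. The bridge in both directions is the elementary observation that the packets lying inside any time window form a block of consecutive indices $a, a+1, \dots, b$; hence a window of length $\tau$ holds $N = b-a+1$ packets exactly when the span $\bar{A}(a,b)$ fits inside $\tau$, the relevant index gap being $b-a = N-1$. I would treat part (ii) first, since it naturally produces the tool reused in part (i).

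For part (ii) I would first extract a per-block spacing bound from the TSpec. Since any window of length $\tau$ holds at most $K$ packets, the $K+1$ packets with indices $j, j+1, \dots, j+K$ cannot all lie in one window, which forces $\bar{A}(j+K) - \bar{A}(j) \ge \tau$ for every $j \ge 0$. I would then telescope this block bound: fixing $n \ge m$ and writing $d = n-m$, take the largest $i$ with $n-iK \ge m$, namely $i = \lfloor d/K \rfloor$, and chain $\bar{A}(n) - \bar{A}(n-iK) \ge i\tau$ with $\bar{A}(n-iK) \ge \bar{A}(m)$ to obtain $\bar{A}(m,n) \ge \lfloor d/K \rfloor\,\tau$. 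The target $(\lambda,\nu)$ bound with $\lambda = K/\tau$ and $\nu = K-1$ then follows from the purely arithmetic inequality $\lfloor d/K \rfloor \ge (d-K+1)/K$ (verified by writing $d = qK+r$ with $0 \le r \le K-1$), together with monotonicity of $\bar{A}$ to supply the $(\cdot)^{+}$ when $d \le \nu$.

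For part (i), fix an integer $j \ge 1$ and argue by contradiction: suppose some window of length $\tau$ contained $K+1$ packets, with consecutive indices $a, \dots, b$ so that $b-a = K$. Being inside the window, their span satisfies $\bar{A}(a,b) \le \tau$, while the $(\lambda,\nu)$ constraint forces $\bar{A}(a,b) \ge \frac{1}{\lambda}(K-\nu)^{+}$. Substituting $K = \lceil \nu \rceil + j + 1$ gives $K-\nu \ge j+1$, hence $\bar{A}(a,b) \ge (j+1)/\lambda > j/\lambda = \tau$, a contradiction; so at most $K$ packets fit, which is case (a). Case (b) is the same computation with $K = \lceil \nu \rceil + j$: there $K-\nu \ge j$ yields $\bar{A}(a,b) \ge j/\lambda$, which strictly exceeds the slightly shorter window $\tau = (j/\lambda)_{-}$. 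In both cases the ceiling $\lceil \nu \rceil$ is precisely what absorbs the fractional part of $\nu$.

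The main obstacle I expect is bookkeeping around conventions rather than any deep step. Two things need care: the off-by-one between the packet count $N$ in a window and the index gap $N-1$ that feeds into $\bar{A}(m,n)$; and the open-versus-closed-window convention (whether a boundary arrival is counted), which is exactly what separates case (a) from case (b) and dictates whether the block bound in part (ii) reads $\ge \tau$ or $> \tau$. Obtaining the tight constant in part (ii)—the $n-m-K+1$ rather than a lossier $n-m-K$—also hinges on the exact telescoping/floor argument above, and would be lost by a cruder covering of the interval by $\lceil \bar{A}(m,n)/\tau \rceil$ windows.
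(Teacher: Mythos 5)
Your proof is correct and takes essentially the same route as the paper: part (i) is just the contrapositive phrasing of the paper's counting argument, built on the identical inequality $\bar{A}(a,a+K)\ge (K-\nu)/\lambda > \tau$ with $K=\lceil\nu\rceil+j+1$ (resp.\ $\lceil\nu\rceil+j$), and part (ii) reproduces the paper's chain $\bar{A}(m,n)\ge\lfloor(n-m)/K\rfloor\,\tau\ge\lambda^{-1}(n-m-K+1)^{+}$, merely supplying the telescoping derivation of the floor bound that the paper asserts without detail. The one caveat---shared with the paper's own proof---is the boundary index $m=0$: there $\bar{A}(0)=0$ is a virtual packet, the block bound $\bar{A}(K)-\bar{A}(0)\ge\tau$ does not follow from the (shift-invariant) TSpec, and so the argument as written really establishes the constraint only for $m\ge 1$.
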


\begin{proof}
\nop{
For the first part, the condition implies, for any $m \ge 1$
$$
\bar{A}(m, m+\nu +2) \ge {A}(m, m+\nu+2) \ge \frac{2}{\lambda} > \tau. 
$$
which says the time distance between any two packets that are $\nu+2$ apart is greater than $\tau$.  In other words, such two packets cannot be in an interval of length $\tau$. Equivalently, this is to say that in an interval of length $\tau$, the maximum number of packets cannot exceed $\nu +2$, which proves the first part. 
}

For the first part, the condition implies, for any $m \ge 1$ and for $\forall j \ge 1$, 
$$
\bar{A}(m, m+\lceil \nu \rceil +j+1) \ge \frac{\lceil \nu \rceil + j+1 - \nu}{\lambda} \\
\ge \frac{j+1}{\lambda} > \frac{j}{\lambda}. 
$$
This is to say the time distance between any two packets that are $\lceil \nu \rceil +j+1$ apart is greater than $\frac{j}{\lambda}$.  In other words, such two packets cannot be in an interval of length $\frac{j}{\lambda}$. Equivalently, this is to say that in an interval of length $\frac{j}{\lambda}$, the maximum number of packets cannot exceed $\lceil \nu \rceil +j +1$.\footnote{Without loss of generality, suppose packet $m$ is the first packet in the period. Note that from packet $m$ to packet $m+ \lceil \nu \rceil +j +1$, there are in total $\lceil \nu \rceil +j +2$ packets. However, since $\bar{A}(m, m+\lceil \nu \rceil +j+1) >  \frac{j}{\lambda}$, the last packet, i.e. packet $m+\lceil \nu \rceil +j +1$, cannot be within this period. So, the total number of packets in this period will not exceed $\lceil \nu \rceil +j +1$.} 

Indeed, for the first part, we also have 
$$
\bar{A}(m, m+ \lceil \nu \rceil +j) \ge \bar{A}(m, m+\nu+j) \ge \frac{j}{\lambda} > \left( \frac{j}{\lambda} \right)_{-}.
$$
Similarly, this is to say that in an interval of length $\left( \frac{j}{\lambda} \right)_{-}$, the maximum number of packets does not exceed $\lceil \nu \rceil +j$. 

For the second part, under the given condition, we have 
\begin{eqnarray}
\bar{A}(m, n) &\ge& \left\lfloor \frac{n-m}{K} \right\rfloor \tau = \left\lceil \frac{n-m-K+1}{K} \right\rceil K \lambda^{-1} \nonumber \\
&\ge& \left( \frac{n-m-K+1}{K} \right)^{+} K  \lambda^{-1}  \nonumber \\
&=&   \lambda^{-1} (n-m-K+1)^{+} \nonumber
\end{eqnarray}
which concludes the second part.
\end{proof}

{\bf Remarks:} From the second half of Theorem 1.(i), if $\nu$ is an integer and let $j=1$, we then obtain that if an arrival process is $(\lambda,  \nu)$-constrained, it conforms to the TSN / DetNet traffic specification with parameters $\tau = \left( \frac{1}{\lambda} \right)_{-}$ and $K= \nu +1$. Here the mapping between $\nu$ and $K$ in the two models is the same as from the second part of the theorem, i.e. Theorem 1.(ii). However, it is worth highlighting that for parameters $\lambda$ and $\tau$, the relation $\tau = \left( \frac{1}{\lambda} \right)_{-}$ from the max-plus traffic model to the TSN / DetNet TSpec is no more recovered from the reverse relation from Theorem 1.(ii) where we differently have $\lambda=K/\tau$. This implies that the two models are in general not equivalent to each other.

\subsection{The Analogy Min-Plus $(\sigma, \rho)$ Traffic Model}

The well-known $(\sigma, \rho)$ traffic model is as the following \cite{Cruz91a}: 
\begin{definition}
An arrival process\nop{, characterized by ${A}(t)$, } is said to be $(\sigma, \rho)$-constrained, if, for all $t \ge s \ge 0$,\nop{ the cumulative amount of traffic $A(s,t)$ in $[s,t]$ satisfies}
$$
A(s,t) \le \rho t + \sigma
$$
where parameters $\rho (> 0)$ and $\sigma (\ge 0)$ are often called the rate and burst parameters respectively.
\end{definition}

It is also known (see e.g. \cite{NetCal}) that the definition of the $(\sigma, \rho)$ model is equivalent to the following, and hence referred to as {\em a min-plus traffic model}:
\begin{lemma}
An arrival process is $(\sigma, \rho)$-constrained, if and only if, there holds
$$
A(t) \le A \otimes \alpha(t)
$$
where $\alpha(t) = \rho t + \sigma$, and the operation $\otimes$ of two functions $F(n)$ and $G(n)$ is the min-plus convolution, defined as $F {\otimes} G (t) \equiv \inf_{0 \le s \le t} \{F(s) + G(t-s)\}$. 
\end{lemma}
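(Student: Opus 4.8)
The plan is to prove the two directions of the equivalence by directly unfolding the definition of the min-plus convolution, namely $A \otimes \alpha(t) = \inf_{0 \le s \le t}\{A(s) + \alpha(t-s)\}$ with $\alpha(u) = \rho u + \sigma$. The whole argument reduces to the elementary observation that the statement ``$A(t) \le A(s) + \alpha(t-s)$ for every $s \in [0,t]$'' is logically identical to ``$A(t) \le \inf_{0 \le s \le t}\{A(s) + \alpha(t-s)\}$''. This mirrors the max-plus proof invoked for Lemma~1, with the roles of $\inf$/$\sup$ and $\le$/$\ge$ interchanged, which is exactly the analogy the paper is emphasizing.

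For the forward direction, I would assume the process is $(\sigma, \rho)$-constrained, i.e. $A(s,t) = A(t) - A(s) \le \rho(t-s) + \sigma$ for all $t \ge s \ge 0$. Fixing an arbitrary $t \ge 0$, I would rewrite this, for each admissible split point $s \in [0,t]$, as $A(t) \le A(s) + \rho(t-s) + \sigma = A(s) + \alpha(t-s)$. Since the right-hand side is precisely the generic term inside the infimum defining $A \otimes \alpha(t)$, and the bound holds for every such $s$, taking the infimum over $s \in [0,t]$ preserves the inequality and yields $A(t) \le \inf_{0 \le s \le t}\{A(s) + \alpha(t-s)\} = A \otimes \alpha(t)$.

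For the converse, I would assume $A(t) \le A \otimes \alpha(t)$ for all $t \ge 0$ and take an arbitrary pair $t \ge s \ge 0$. Because $A \otimes \alpha(t)$ is an infimum over all split points in $[0,t]$, it is upper-bounded by the single-term value evaluated at that particular $s$, so $A(t) \le A \otimes \alpha(t) \le A(s) + \alpha(t-s) = A(s) + \rho(t-s) + \sigma$. Rearranging gives $A(s,t) = A(t) - A(s) \le \rho(t-s) + \sigma$, which is exactly the $(\sigma, \rho)$ constraint, completing the equivalence.

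I do not expect a genuine obstacle here, since the result is the min-plus dual of Lemma~1 and follows purely from the definitions; the only points requiring care are bookkeeping ones. First, one must keep the admissible endpoint $s = t$ inside the infimum, so that the trivial term $A(t) + \alpha(0) = A(t) + \sigma$ leaves the convolution well defined and consistent with the convention $A(0) = 0$. Second, the ``for all $s \le t$'' quantifier must be handled in the correct direction in each implication: in the forward direction it is turned into an infimum over all split points, while in the converse direction the infimum is specialized to a single chosen $s$. I would also make explicit that the burstiness bound used throughout is the per-interval form $A(s,t) \le \rho(t-s) + \sigma$ matching $\alpha(t) = \rho t + \sigma$, as this is what makes both directions close cleanly.
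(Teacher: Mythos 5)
Your proof is correct and is precisely the standard definitional unfolding that the paper itself omits by citing the literature (just as it omits the proof of the dual max-plus Lemma~1): the quantifier ``for all $s\in[0,t]$'' is exchanged with the infimum in one direction and specialized to a single $s$ in the other. Your explicit note that the constraint must be read in the per-interval form $A(s,t)\le\rho(t-s)+\sigma$ (the paper's displayed $\rho t+\sigma$ is evidently a typo) is exactly the right bookkeeping for both directions to close.
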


Note that for any period defined by $s (\le t)$ and $t$, we always have $A(s,t) = \sum_i^{I} A_i(s,t)$, based on which, the superposition property of the  $(\sigma, \rho)$ model is easily verified (see e.g. \cite{NetCal}):

\begin{lemma}\label{lm-sr}
 Consider the superposition of $I (\ge 2)$ arrival processes ${A}_i(t)$, $i = 1, \dots, I$. If each arrival process $A_i(t)$ is $(\sigma_i, \rho_i)$-constrained, the aggregate process ${A}(t)$ is  $(\sigma, \rho)$-constrained with 
 $$\rho = \sum_{i=1}^{I} \rho_i; \qquad \sigma = \sum_{i=1}^{I}\sigma_i.$$
\end{lemma}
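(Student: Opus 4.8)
The plan is to exploit the linearity of cumulative traffic counts under aggregation, which the text preceding the statement has already highlighted. The entire difficulty of the superposition question in the max-plus setting---where the arrival-time function of the aggregate does not decompose additively in terms of the individual arrival-time functions---simply disappears here, because the cumulative-amount description is additive by construction.

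First I would record the additivity identity. Since the aggregate process carries the combined traffic of the components, $A(t) = \sum_{i=1}^{I} A_i(t)$ for every $t \ge 0$, and therefore, for any $t \ge s \ge 0$,
$$
A(s,t) = A(t) - A(s) = \sum_{i=1}^{I}\bigl(A_i(t) - A_i(s)\bigr) = \sum_{i=1}^{I} A_i(s,t).
$$
Next I would invoke the hypothesis that each component is $(\sigma_i, \rho_i)$-constrained, i.e. $A_i(s,t) \le \rho_i t + \sigma_i$ for all $t \ge s \ge 0$, and sum these $I$ inequalities. Using the additivity identity, this yields
$$
A(s,t) = \sum_{i=1}^{I} A_i(s,t) \le \sum_{i=1}^{I}\bigl(\rho_i t + \sigma_i\bigr) = \Bigl(\sum_{i=1}^{I}\rho_i\Bigr) t + \sum_{i=1}^{I}\sigma_i.
$$
Because this holds for every $t \ge s \ge 0$, the aggregate process is $(\sigma, \rho)$-constrained with $\rho = \sum_{i=1}^{I}\rho_i$ and $\sigma = \sum_{i=1}^{I}\sigma_i$, which is exactly the claim.

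There is essentially no obstacle in this min-plus argument; the only point worth stating carefully is the additivity identity itself, and that step is immediate from the definition of the aggregate. The real significance lies in the contrast: this effortless proof, resting entirely on $A(s,t) = \sum_i A_i(s,t)$, is precisely what fails in the arrival-time-function model, and re-establishing the analogous superposition property there directly is the genuine contribution the paper is building toward.
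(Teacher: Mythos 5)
Your proof is correct and follows exactly the route the paper indicates: it sketches this lemma via the additivity identity $A(s,t) = \sum_{i=1}^{I} A_i(s,t)$ and then sums the individual constraints, which is precisely your argument. Nothing is missing.
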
 

In contrast to the min-plus $(\sigma, \rho)$ model, for the max-plus $(\lambda, \nu)$ model, its superposition property has not been found / proved. In fact, the superposition property of the more general arrival curve model is a long-standing open problem \cite{Chang00}\cite{Jiang-mascots09}\cite{Liebeherr17}. This motivates and is focused in the next section.  

\section{The Superposition Property of the Max-Plus Traffic Model} \label{sec-3}
 
\subsection{The Difficulty}
For the superposition of arrival processes, the following relationship was initially derived in \cite{Jiang-mascots09} and has also been verified in \cite{Liebeherr17}:  

\begin{lemma}
Given the arrival time function $\bar{A}_i(n)$ of each individual process, the arrival time function $\bar{A}(n)$ of the aggregate process can be related to $\bar{A}_i(n)$ as, 
\begin{equation} \label{aggregate}
\bar{A}(n) = \inf_{m_1 + \cdots + m_I=n} \max_{i=1, \dots, I}\bar{A}_i(m_i) .
\end{equation}
\end{lemma}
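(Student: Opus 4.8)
The plan is to connect the arrival-time description to the companion packet-counting description and exploit the pseudo-inverse relationship between them. For each process $i$, let $N_i(t)$ denote the number of packets of process $i$ that have arrived by time $t$, and let $N(t) = \sum_{i=1}^{I} N_i(t)$ be the aggregate packet count. By the definition of an arrival time function, $\bar{A}_i(m) \le t$ holds if and only if $N_i(t) \ge m$, and likewise $\bar{A}(n) = \inf\{t : N(t) \ge n\}$. Note the contrast with the min-plus setting, where superposition is immediate from the additivity $A(s,t) = \sum_i A_i(s,t)$ of cumulative functions; here arrival times do not simply add, which is exactly why the identity is nontrivial. The whole statement then reduces to showing that the above infimum over $t$ coincides with $\inf_{m_1 + \cdots + m_I = n} \max_{i} \bar{A}_i(m_i)$, which I would establish by two opposite inequalities.

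For the ``$\le$'' direction, I would fix any feasible split $m_1 + \cdots + m_I = n$ and set $t^\ast = \max_{i} \bar{A}_i(m_i)$. At time $t^\ast$ each process $i$ has already delivered its packet $m_i$ (since $\bar{A}_i(m_i) \le t^\ast$), so $N_i(t^\ast) \ge m_i$ and therefore $N(t^\ast) \ge \sum_i m_i = n$. Hence $t^\ast$ lies in $\{t : N(t) \ge n\}$, giving $\bar{A}(n) \le t^\ast$. As the split was arbitrary, taking the infimum over all splits yields $\bar{A}(n) \le \inf_{\sum m_i = n}\max_i \bar{A}_i(m_i)$.

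For the reverse ``$\ge$'' direction, I would evaluate the counts at the critical time $t = \bar{A}(n)$, where $N(t) = \sum_i N_i(t) \ge n$. Because the individual counts already sum to at least $n$, I can select integers $m_i$ with $0 \le m_i \le N_i(t)$ and $\sum_i m_i = n$; for each such $m_i$ one has $\bar{A}_i(m_i) \le \bar{A}_i(N_i(t)) \le t$, so $\max_i \bar{A}_i(m_i) \le t = \bar{A}(n)$. This exhibits a particular split whose cost does not exceed $\bar{A}(n)$, giving $\inf_{\sum m_i = n}\max_i \bar{A}_i(m_i) \le \bar{A}(n)$. Combining the two inequalities proves the claimed identity.

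The step I expect to be the main obstacle is the reverse direction in the presence of simultaneous arrivals across processes, where $N(\cdot)$ can jump past the level $n$ exactly at $t = \bar{A}(n)$, so that $N(t)$ strictly exceeds $n$. The key to handling this cleanly is the freedom to pick $m_i \le N_i(t)$ rather than $m_i = N_i(t)$: selecting a total of exactly $n$ among the already-arrived packets keeps each $\bar{A}_i(m_i) \le t$ while meeting $\sum_i m_i = n$. A secondary care point is fixing a single convention for the counting functions (e.g. right-continuity, so that $\bar{A}_i(m) \le t \iff N_i(t) \ge m$ and $N(\bar{A}(n)) \ge n$); once this is fixed, every inequality above lines up, and because there are only finitely many splits for each $n$, the infimum over splits is in fact a minimum, so no further limiting argument is needed.
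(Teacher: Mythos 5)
The paper does not actually prove this lemma; it states the identity and defers to the cited references (where it is established via the duality between arrival-time functions and their counting-function pseudo-inverses). Your argument is a correct, self-contained proof along essentially that same standard route: the ``$\le$'' direction via $N(t^\ast)\ge\sum_i m_i$ and the ``$\ge$'' direction by exhibiting a feasible split at $t=\bar{A}(n)$. The two points that could sink such a proof --- simultaneous arrivals making $N$ jump past level $n$, and the need for a fixed right-continuity convention so that $\bar{A}_i(m)\le t \iff N_i(t)\ge m$ and the infimum defining $\bar{A}(n)$ is attained --- are both identified and handled correctly (choosing $m_i\le N_i(t)$ rather than $m_i=N_i(t)$ is exactly the right fix). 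No gaps.
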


The expression (\ref{aggregate}) is neat, based on which, we can write 
\begin{eqnarray}
\bar{A}(m,n) &=&  \inf_{m_1 + \cdots + m_I=n} \max_{i=1, \dots, I}\bar{A}_i(m_i) - \nonumber\\
&& \inf_{m_1 + \cdots + m_I=m} \max_{i=1, \dots, I}\bar{A}_i(m_i) \label{agg-2} 
\end{eqnarray}

Unfortunately, it is unknown how to further relate the right hand side of (\ref{agg-2}) directly to $\bar{A}_i(m_i,n_i)$, i.e. to write the right hand side as a function of and only of $\bar{A}_i(m_i,n_i)$, $i=1, \dots, I$. This makes it difficult to find the superposition property of the $(\lambda, \nu)$ model from the above relationship. 

To bypass this difficulty, when packet length information is known, an indirect approach (see e.g., \cite{Chang00} \cite{Jiang-mascots09}) has been proposed. While this indirect approach is mathematically sound, its application is limited, some compromise may have to be made and the result can be loose. More discussion on these will be provided in Sec. \ref{sec-4}.
 
\subsection{The Superposition Property of the $(\lambda,  \nu)$ Model}  

This subsection is devoted to finding and proving the superposition property of the arrival time function based $(\lambda,  \nu)$ max-plus traffic model, summarized in the following theorem.  

\begin{theorem}\label{th-1}
 Consider the superposition of $I (\ge 2)$ arrival processes $\bar{A}_i$, $i = 1, \dots, I$. If all arrival processes $\bar{A}_i$ are $(\lambda_i, \nu_i)$-constrained,  the aggregate process $\bar{A}$ is $(\lambda^{dir.},  \nu^{dir.})$-constrained with
 $$
\lambda^{dir.} = \sum_{i=1}^{I} \lambda_i; \qquad
\nu^{dir.} = \sum_{i=1}^{I}\nu_i + (I-1). 
 $$
 \end{theorem}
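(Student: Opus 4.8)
The plan is to work directly with the aggregate arrival time function through the representation (\ref{aggregate}), exploiting that $\bar{A}(m)$ is an \emph{infimum} over decompositions $m_1 + \cdots + m_I = m$. Since an infimum is bounded above by any particular feasible choice, to establish $\bar{A}(n) - \bar{A}(m) \ge \frac{1}{\lambda^{dir.}}(n - m - \nu^{dir.})^{+}$ it suffices, for each pair $n \ge m$, to \emph{exhibit a single} decomposition $(p_1, \dots, p_I)$ with $\sum_i p_i = m$ such that $\max_i \bar{A}_i(p_i) \le \bar{A}(n) - c$, where $c := \frac{1}{\lambda^{dir.}}(n - m - \nu^{dir.})^{+}$. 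If this succeeds then $\bar{A}(m) \le \max_i \bar{A}_i(p_i) \le \bar{A}(n) - c$, which is exactly the claim. The case $n - m \le \nu^{dir.}$ is immediate from monotonicity of $\bar{A}$, so I would assume $c > 0$ throughout.

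First I would fix a decomposition $(q_1^*, \dots, q_I^*)$ attaining the infimum in (\ref{aggregate}) for $\bar{A}(n)$, and write $\Theta := \bar{A}(n) = \max_i \bar{A}_i(q_i^*)$, so that $\bar{A}_i(q_i^*) \le \Theta$ for every $i$. The idea is then to ``back off'' each coordinate: set $\delta_i := \lceil \lambda_i c + \nu_i \rceil$ and $p_i := (q_i^* - \delta_i)^{+}$. Applying the $(\lambda_i, \nu_i)$ constraint to $\bar{A}_i(p_i, q_i^*)$ gives $\bar{A}_i(q_i^*) - \bar{A}_i(p_i) \ge \frac{1}{\lambda_i}(\delta_i - \nu_i)^{+} \ge c$ whenever $p_i = q_i^* - \delta_i$, and hence $\bar{A}_i(p_i) \le \Theta - c$; the clamped case $p_i = 0$ is covered by first verifying $\Theta \ge c$, which itself follows by summing the individual bounds $q_i^* \le \lambda_i \Theta + \nu_i$. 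This guarantees $\max_i \bar{A}_i(p_i) \le \Theta - c$, so the only remaining point is feasibility, namely $\sum_i p_i \ge m$, after which I would trim the $p_i$ down so the sum is exactly $m$, which only decreases arrival times and preserves the bound.

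The crux, and the step I expect to be the main obstacle, is the counting estimate $\sum_i \delta_i \le n - m$, since this is where the additive constant $I - 1$ (rather than $I$) must be extracted. Here integrality is essential: each ceiling satisfies $\delta_i < \lambda_i c + \nu_i + 1$, so summing gives $\sum_i \delta_i < \lambda^{dir.} c + \sum_i \nu_i + I = (n - m - \nu^{dir.}) + (\nu^{dir.} + 1) = n - m + 1$, using $\lambda^{dir.} c = n - m - \nu^{dir.}$ and $\sum_i \nu_i + I = \nu^{dir.} + 1$. Because $\sum_i \delta_i$ is an integer strictly below $n - m + 1$, it is at most $n - m$, whence $\sum_i p_i \ge \sum_i (q_i^* - \delta_i) = n - \sum_i \delta_i \ge m$. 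It is precisely this rounding slack, strictly less than $I$ across the $I$ streams and hence saving one unit after taking integer parts, that upgrades the naive constant $\sum_i \nu_i + I$ obtainable from a coordinate-wise interval-counting argument to the sharper $\nu^{dir.} = \sum_i \nu_i + (I - 1)$ claimed in the theorem.
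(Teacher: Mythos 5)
Your proposal is correct, and it takes a genuinely different route from the paper. The paper proves the $I=2$ base case (Lemma~\ref{lm-1}) by an exhaustive case analysis on which constituent flow packets $m$ and $n$ belong to and how many packets of each flow lie strictly between them, extracting the ``$+1$'' from the extra inter-arrival interval created by interleaving, and then obtains general $I$ by induction. You instead work for all $I$ at once directly on the representation (\ref{aggregate}) --- precisely the expression the paper declares too difficult to exploit in (\ref{agg-2}) --- by using its asymmetry: the infimum defining $\bar{A}(m)$ is bounded \emph{above} by any single witness decomposition, while the infimum defining $\bar{A}(n)$ is attained (the index set is finite), so one never needs to express $\bar{A}(m,n)$ purely in terms of the $\bar{A}_i(m_i,n_i)$. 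Your key steps all check out: the unclamped coordinates satisfy $\bar{A}_i(p_i)\le\Theta-c$ by the individual $(\lambda_i,\nu_i)$ constraint since $\delta_i-\nu_i\ge\lambda_i c>0$; the clamped case reduces to $\Theta\ge c$, which follows from $n=\sum_i q_i^*\le\lambda^{dir.}\Theta+\sum_i\nu_i$; and the feasibility count $\sum_i\delta_i<\lambda^{dir.}c+\sum_i\nu_i+I=n-m+1$ together with integrality yields $\sum_i\delta_i\le n-m$, which is exactly where the constant $I-1$ rather than $I$ is recovered (the paper recovers the same constant combinatorially, as $N$ interleaved packets having $N-1$ intervals). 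What each approach buys: the paper's argument is elementary and makes the packet-level mechanism visible, but is long, case-heavy, and needs a separate induction; yours is shorter, uniform in $I$, and shows that the inf-max formula is in fact directly usable, at the mild cost of invoking attainment of the infimum and monotonicity of the $\bar{A}_i$ for the final trimming step.
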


Theorem \ref{th-1} can be proved by induction. 
 We first present the base case with $I=2$ in Lemma~\ref{lm-1} and its proof.  
 
 \begin{lemma}\label{lm-1}
 Consider the superposition of two processes $\bar{A}_i$, $i = 1, 2$. If both processes $\bar{A}_i$ are $(\lambda_i, \nu_i)$-constrained,  the aggregate process $\bar{A}$ is $(\lambda,  \nu)$-constrained with
 $$
\lambda =  \lambda_1 + \lambda_2; \qquad \nu = \nu_1 + \nu_2 + 1.
 $$
 \end{lemma}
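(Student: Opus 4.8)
The plan is to work directly on the arrival time functions through the aggregation formula (\ref{aggregate}), which for $I=2$ reads $\bar{A}(n) = \inf_{p+q=n}\max\{\bar{A}_1(p), \bar{A}_2(q)\}$, and to exploit a balancing argument between the two constituent processes. First I would fix indices $n \ge m \ge 0$ and take an optimal split $(a,b)$ with $a+b=n$ attaining the infimum, so that $\bar{A}(n) = \max\{\bar{A}_1(a), \bar{A}_2(b)\}$; without loss of generality assume process $1$ realizes the maximum, i.e. $\bar{A}(n) = \bar{A}_1(a) \ge \bar{A}_2(b)$. For the smaller index $m$, I would bound $\bar{A}(m)$ from above by \emph{any} admissible split: for every integer $p$ with $\max\{0, m-b\} \le p \le \min\{a, m\}$, setting $q=m-p$ gives $p \le a$, $q \le b$ and $\bar{A}(m) \le \max\{\bar{A}_1(p), \bar{A}_2(q)\}$.

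Combining these two facts yields $\bar{A}(m,n) \ge \bar{A}_1(a) - \max\{\bar{A}_1(p), \bar{A}_2(q)\} = \min\{\bar{A}_1(a)-\bar{A}_1(p),\ \bar{A}_1(a)-\bar{A}_2(q)\}$. The first term equals $\bar{A}_1(p,a) \ge \lambda_1^{-1}(a-p-\nu_1)^{+}$ by the $(\lambda_1,\nu_1)$-constraint, while the second is at least $\bar{A}_2(b)-\bar{A}_2(q) = \bar{A}_2(q,b) \ge \lambda_2^{-1}(b-q-\nu_2)^{+}$, using $\bar{A}_1(a) \ge \bar{A}_2(b)$. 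Hence, writing $f_1(p) = \lambda_1^{-1}(a-p-\nu_1)$ and $f_2(p) = \lambda_2^{-1}(b-m+p-\nu_2)$, I obtain $\bar{A}(m,n) \ge \min\{f_1(p), f_2(p)\}$ for every admissible $p$, and I am free to pick the $p$ that maximizes this minimum. Since $f_1$ decreases and $f_2$ increases in $p$, the continuous optimum occurs where $f_1=f_2$; adding the two equal quantities cancels the shared variable and shows this common value equals $\theta := \lambda^{-1}(n-m-\nu_1-\nu_2)$ with $\lambda=\lambda_1+\lambda_2$.

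The crux --- and the step I expect to be the main obstacle --- is that $p$ must be an integer, so I cannot in general hit the balance point $\tilde p$ exactly, and rounding degrades one of $f_1, f_2$ below $\theta$. Choosing $p=\lfloor\tilde p\rfloor$ costs at most $\{\tilde p\}/\lambda_2$ in $f_2$, while $p=\lceil\tilde p\rceil$ costs at most $(1-\{\tilde p\})/\lambda_1$ in $f_1$; taking the better of the two roundings, the incurred loss is $\min\{\{\tilde p\}/\lambda_2,\ (1-\{\tilde p\})/\lambda_1\}$, which I would bound using $\max_{0\le x\le 1}\min\{x/\lambda_2,\ (1-x)/\lambda_1\} = 1/(\lambda_1+\lambda_2) = \lambda^{-1}$, the maximum being attained at $x=\lambda_2/\lambda$. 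This gives $\bar{A}(m,n) \ge \theta - \lambda^{-1} = \lambda^{-1}(n-m-\nu_1-\nu_2-1)$, exactly the claimed constant $\nu=\nu_1+\nu_2+1$; the extra unit in $\nu$ is precisely the price of integrality, and it is what makes the direct bound both attainable and sharper than what the indirect route delivers.

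Finally I would tidy up the boundary conditions: the trivial regime $n-m \le \nu$, where the right-hand side is $0$ and the inequality holds since $\bar{A}$ is nondecreasing; the verification that the rounded $p$ stays inside the admissible range $[\max\{0,m-b\},\ \min\{a,m\}]$ and that the positive parts in $f_1,f_2$ are indeed active at the optimum (so that dropping $(\cdot)^{+}$ is legitimate); and the symmetric case in which process $2$ realizes the maximum at $n$. The general statement of Theorem~\ref{th-1} then follows by induction, feeding the two-process bound into itself, with $\sum_{i}\nu_i + (I-1)$ accumulating one unit of integrality loss per pairwise merge.
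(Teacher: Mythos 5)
Your route is genuinely different from the paper's (which does a combinatorial case analysis on the interleaving of the two packet streams), and the balancing idea is attractive: when the balance point $\tilde p$ lies in the admissible range, taking the better rounding does recover exactly $\lambda^{-1}(n-m-\nu_1-\nu_2-1)$. But the item you defer to the end --- ``the verification that the rounded $p$ stays inside the admissible range'' --- is not a tidy-up; it is where the argument breaks, because $\tilde p$ can lie far outside $[\max\{0,m-b\},\min\{a,m\}]$ and then the best value of $\min\{f_1(p),f_2(p)\}$ over admissible integers, attained at an endpoint, falls strictly below the target. Concretely, take $\lambda_1=\lambda_2=1$, $\nu_1=\nu_2=0$, $\bar A_1=(9.5,\,20,\,21,\dots)$ and $\bar A_2=(1,2,\dots,9,\,110,\,111,\dots)$; both are $(1,0)$-constrained. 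For $n=11$ the optimal split is $(a,b)=(2,9)$ with $\bar A(11)=\bar A_1(2)=20\ge\bar A_2(9)=9$, matching your normalization. For $m=3$ the admissible range is $p\in\{0,1,2\}$, $f_1(p)=2-p$, $f_2(p)=6+p$, so $\max_p\min\{f_1,f_2\}=2$, whereas the target is $\tfrac12(11-3-1)=3.5$ (the lemma itself is fine here: $\bar A(3,11)=17$). The loss is not the $1/\lambda$ rounding cost but the fact that your chain discards the slack $\bar A_1(a)-\bar A_2(b)$ in the step $\bar A_1(a)-\bar A_2(q)\ge\bar A_2(b)-\bar A_2(q)$, and a minimum of the two single-process bounds can never exceed the weaker one, while the target genuinely needs the $\lambda_i$-weighted \emph{sum} of both contributions --- which your scheme only achieves at the exact balance point.

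The out-of-range cases can be rescued, but only by a separate argument you have not supplied: in the nontrivial regime they occur only when the endpoint has $p=0$ or $q=0$, where $\bar A_i(0)=0$ resolves the max and leaves a pure single-process bound, and one must then check (using the very inequality that defines the out-of-range case) that this single bound already meets the target. The paper avoids the issue entirely by never upper-bounding $\bar A(m)$ through a chosen split: it identifies the first process-$1$ packet $m_1$ after position $m$ (so $\bar A_1(m_1)\ge\bar A(m)$) and the last process-$2$ packet $n_2$ before position $n$ (so $\bar A_2(n_2)\le\bar A(n)$), obtaining \emph{simultaneously} $\bar A(m,n)\ge\bar A_1(m_1,n_1)$ and $\bar A(m,n)\ge\bar A_2(m_2,n_2)$ with the counting identity $(n_1-m_1)+(n_2-m_2)=n-m-1$; the ``$+1$'' in $\nu$ is a packet-counting offset, not an integrality loss. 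Your induction step for Theorem~\ref{th-1} coincides with the paper's and is fine.
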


 \begin{proof}
Though lengthy, the complete proof is provided below, as we believe, the techniques used in the proof also provide insights when dealing with similar problems. In addition, the proof itself also serves as an indication of the difficulty as discussed in the previous subsection. 

To help the presentation, we let 
$$\bar{\alpha}(n) = \frac{1}{\lambda}  (n - \nu)^{+} =   \frac{1}{\lambda_1+\lambda_2}  (n - \nu_1 - \nu_2 -1)^{+} .$$ 
Then, with the definition of the $(\lambda,  \nu)$ model, to prove the lemma is to prove that, for all $n \ge m \ge 0$, there holds:
 \begin{equation}\label{eq-lm-1-0}
 \bar{A}(m, n) \ge \bar{\alpha}(n-m).
 \end{equation}

We start with two trivial cases. One is, for any $n=m (\ge 0)$, $\bar{A}(m, n) = 0$ by definition, with which, $(\ref{eq-lm-1-0})$ holds because $ \bar{\alpha}(0) =( - \nu)^{+} = 0$. Another is, for any $n>m (\ge 0)$ with $n-m=1$, $\bar{A}(m, n) \ge 0$ because of non-negative inter-arrival time between $m$ and $m+1$, with which, $(\ref{eq-lm-1-0})$ holds because $ \bar{\alpha}(1) = ( - \nu_1 - \nu_2)^{+} = 0$. 

Next, we consider any $n > m (\ge 0)$ with $n-m >1$. The corresponding time period is $[\bar{A}(m), \bar{A}(n)]$. We denote the set of packets {\bf between} $m$ and $n$ in $\bar{A}$ as $\{m+1, \dots, n-1\}_{\bar{A}}$. \footnote{This set has been intentionally used in the proof to avoid ambiguity that would arise if the time period $[\bar{A}(m), \bar{A}(n)]$ had been used, because concurrent arrivals may exist or happen both in the individual arrival processes and in the aggregate process even at $\bar{A}(m)$ and/or $\bar{A}(n)$, which cannot be distinguished by using $[\bar{A}(m), \bar{A}(n)]$.}

Without loss of generality, we suppose customer $n$ is from $\bar{A}_1$ and is the $n_1$-th customer in $\bar{A}_1$. In other words, we have 
 \begin{equation}
\bar{A}(n)=\bar{A}_1(n_1). \label{eq-ft-0}
 \end{equation}
Under this setting, there are three possibilities about customer $A(m)$: ({\bf Case 1}) It is either from $\bar{A}_1$, or ({\bf Case 2}) is from $\bar{A}_2$, or ({\bf Case 3}) is the virtual packet at time 0 for which we have $A(0)=0$. {\em For the first two cases, we must have $m\ge1$, and for the third case, $m=0$}. Accordingly, we prove for the three cases:  

{\bf Case 1: Packet $m$ in the aggregate process is from $\bar{A}_1$.} Let $m_1$ denote its number in $\bar{A}_1$, which implies: \begin{eqnarray}
\bar{A}(m) &=& \bar{A_1}(m_1)  \\
\bar{A}(m, n) &=& \bar{A_1}(m_1, n_1) 
\end{eqnarray}

Now, given $m$ and $n$ are both from $\bar{A}_1$, there are (and only) three sub-cases, Case 1.1 - Case 1.3, which we consider below.

{\em Case 1.1: In $\{m+1, \dots, n-1\}_{\bar{A}}$, there is no packet from $\bar{A}_2$.}  In this sub-case, we have: 
\begin{eqnarray}
n - m &=& n_1-m_1. \label{eq-ft-1}
\end{eqnarray}
In addition, since  $\bar{A}_1$ is constrained  by $(\lambda_1, \nu_1)$, we have 
\begin{eqnarray}
(\lambda_1 + \lambda_2) \cdot \bar{A}(m, n)  &\ge& \lambda_1 \cdot \bar{A}(m, n)  = \lambda_1 \cdot \bar{A_1}(m_1, n_1) \nonumber \\
 &\ge& (n_1 - m_1 -  \nu_1)^{+} \label{eq-tm1} \nonumber\\
 &=& (n - m - \nu_1)^{+} \nonumber
\end{eqnarray}
which gives 
\begin{equation}
\bar{A}(m, n)  \ge \frac{1}{\lambda_1 + \lambda_2} (n-m -  \nu_1)^{+}  \ge \bar{\alpha}(n-m). \nonumber
\end{equation}

{\em Case 1.2: In $\{m+1, \dots, n-1\}_{\bar{A}}$, there is one packet from $\bar{A}_2$.} In this sub-case, we have: 
\begin{eqnarray}
n - m &=& (n_1-m_1) + 1 \label{eq-ft-2}
\end{eqnarray}
where, on the right hand side, the first term represents the number of intervals in $\bar{A}_1$ and the second term represents that an additional interval is introduced because of the one packet from $\bar{A}_2$, in $\{m, \dots, n\}_{\bar{A}}$. 

Similarly, we have
\begin{eqnarray}
(\lambda_1 + \lambda_2) \cdot \bar{A}(m, n)  &\ge& \lambda_1 \cdot \bar{A_1}(m_1, n_1) \ge (n_1 - m_1) - \nu_1 \nonumber \\
 &=& (n - m -1 - \nu_1)^{+} \nonumber
\end{eqnarray}
which gives 
\begin{equation}
\bar{A}(m, n)  \ge \frac{1}{\lambda_1 + \lambda_2} \cdot (n-m -  \nu_1 -  1)^{+} \ge \bar{\alpha}(n-m). \nonumber
\end{equation}

{\em Case 1.3: In $\{m+1, \dots, n-1\}_{\bar{A}}$, there are multiple packets from $\bar{A}_2$.} Without loss of generality, let $m_2$ be the first and $n_2$ be the last of these packets from $\bar{A}_2$. In this sub-case, the following facts hold: 
\begin{eqnarray}
\bar{A}(n) &\ge& \bar{A}_2 (n_2) \\
\bar{A}(m) &\le& \bar{A}_2 (m_2) 
\end{eqnarray}
which gives $\bar{A}(m, n)  \ge \bar{A_2}(m_2, n_2) $. 
In addition, we have
\begin{eqnarray}
n-m &=& (n_1 - m_1) + (n_2 - m_2) +1 \label{eq-ft-3}
\end{eqnarray}
where the left hand side represents the number intervals between packets $m$ and $n$ in $\bar{A}$. For the right hand side, in $\{m, \dots, n\}_{\bar{A}}$, we now have $(n_1 - m_1 +1)$ packets from $\bar{A}_1$, and $(n_2 - m_2 +1)$ packets from $\bar{A}_2$, which in total gives $(n_1 - m_1) + (n_2 - m_2) +2 \equiv N$ number of packets that have $N - 1$ intervals, which is $(n_1 - m_1) + (n_2 - m_2) +1$. 

We then have
\begin{eqnarray}
&& (\lambda_1 + \lambda_2) \cdot \bar{A}(m, n)  \nonumber\\
&=& \lambda_1 \cdot \bar{A_1}(m_1, n_1) +   \lambda_2 \cdot \bar{A}(m, n) \nonumber \\
&\ge& \lambda_1 \cdot \bar{A_1}(m_1, n_1) +   \lambda_2 \cdot \bar{A_2}(m_2, n_2) \nonumber \\
&\ge& (n_1 - m_1 - \nu_1)^{+} + (n_2 - m_2 -  \nu_2)^{+}  \nonumber \\
&\ge& ((n_1 - m_1 - \nu_1) + (n_2 - m_2 -  \nu_2))^{+} \nonumber \\
&=&  ((n - m-1) - ( \nu_1 +  \nu_2))^{+} 
\end{eqnarray}
and hence
\begin{eqnarray}
\bar{A}(m, n)  &\ge& \frac{1}{\lambda_1 + \lambda_2} (n-m -  \nu_1 - \nu_2 -1)^{+}  = \bar{\alpha}(n-m). \nonumber
\end{eqnarray}

Combing\nop{ the three sub-cases,} Case 1.1 - Case 1.3, (\ref{eq-lm-1-0}) is proved for the first case. In the following, we consider the second case. 

{\bf Case 2: Packet $m$ in the aggregate process is from $\bar{A}_2$.}  Without of generality, suppose it is the $m_2$-th packet in $\bar{A}_2$, which also implies 
\begin{eqnarray}
\bar{A}_2(m_2) &=&\bar{A}(m)  \label{c2-tm-1}
\end{eqnarray}

In this case, there are also (and only) three sub-cases, Case 2.1 - Case 2.3, which we consider below. 

{\em Case 2.1: In $\{m+1, \dots, n-1\}_{\bar{A}}$, there is no packet from $\bar{A}_1$ but there is at least one packet from $\bar{A}_2$.}  Let $n_2$ denote the last such packet from $\bar{A}_2$. Based on the definition of $n_2$, we must have 
\begin{eqnarray}
\bar{A_2}(n_2) &\le& \bar{A}(n) = \bar{A_1}(n_1) \label{c2-tm-2} \\
n - m &=& (n_2 - m_2) + 1 \label{eq-ft-5}
\end{eqnarray}
where, on the right hand side of (\ref{eq-ft-5}), the first term $(n_2-m_2)$ represents the number of intervals of packets from $\bar{A}_2$ and the second term represents the additional interval introduced by the one packet, i.e. $n_1$, from $\bar{A}_1$ in  $\{m, \dots, n\}_{\bar{A}}$. 

With (\ref{c2-tm-1}) and (\ref{c2-tm-2}), we now have,
\begin{eqnarray}
&& (\lambda_1 + \lambda_2) \cdot \bar{A}(m, n)  \nonumber\\
&\ge&  \lambda_2 \cdot (\bar{A}(n) - \bar{A}(m) ) \ge  \lambda_2 \cdot (\bar{A_2}(n_2) - \bar{A_2}(m_2) )\nonumber \\
&\ge& (n_2 - m_2 -  \nu_2 )^{+}=  (n - m -1 - \nu_2 )^{+} \nonumber
\end{eqnarray}
and hence
\begin{eqnarray}
\bar{A}(m, n)  &\ge& \frac{1}{\lambda_1 + \lambda_2} (n-m -  \nu_2 -1)^{+}  \ge \bar{\alpha}(n-m). \nonumber 
\end{eqnarray}

{\em Case 2.2: In $\{m+1, \dots, n-1\}_{\bar{A}}$, there is no packet from $\bar{A}_2$ but there is at least one packet from $\bar{A}_1$.}  Let $m_1$ denote the first such packet from $\bar{A}_1$. Based on the definition of $m_1$, we must have 
\begin{eqnarray}
\bar{A_1}(m_1) &\ge& \bar{A}(m) = \bar{A_2}(m_2) \label{c2-tm-3} \\
n - m &=& (n_1 - m_1) + 1 \label{eq-ft-6}
\end{eqnarray}
where, on the right hand side of (\ref{eq-ft-6}), the first term $(n_1-m_1)$ represents the number of intervals of packets from $\bar{A}_1$ and the second term represents that an additional interval is introduced by the one packet, i.e. $m_2$, from $\bar{A}_2$ in  $\{m, \dots, n\}_{\bar{A}}$.

With (\ref{eq-ft-0}),  (\ref{c2-tm-3}) and (\ref{eq-ft-6}), we now have,
\begin{eqnarray}
&& (\lambda_1 + \lambda_2) \cdot \bar{A}(m, n)  \nonumber \\
&\ge&  \lambda_1 \cdot (\bar{A}(n) - \bar{A}(m) ) \ge  \lambda_1 \cdot (\bar{A_1}(n_1) - \bar{A_1}(m_1) )\nonumber \\
&\ge& (n_1 - m_1 -  \nu_1)^{+} =  (n - m -1 - \nu_1)^{+}  \nonumber
\end{eqnarray}
and hence
\begin{eqnarray}
\bar{A}(m, n)  &\ge& \frac{1}{\lambda_1 + \lambda_2} (n-m - \nu_1 - 1)^{+}  \ge \bar{\alpha}(n-m). \nonumber
\end{eqnarray}

{\em Case 2.3: In $\{m+1, \dots, n-1\}_{\bar{A}}$, there is at least one packet from $\bar{A}_1$ and there is at least one packet from $\bar{A}_2$.}  Let $m_1$ denote the first such packet from $\bar{A}_1$, and $n_2$ the last such packet from $\bar{A}_2$. Based on the definitions of $m_1$ and $n_2$, we must have 
\begin{eqnarray}
\bar{A_1}(m_1) &\ge& \bar{A}(m) = \bar{A_2}(m_2) \label{c2-tm-3-a} \\
\bar{A_2}(n_2) &\le& \bar{A}(n) = \bar{A_1}(n_1) \label{c2-tm-2-a}  \\
n - m &=& (n_1 - m_1) + (n_2- m_2)  +1 \label{eq-ft-7}
\end{eqnarray}
where (\ref{c2-tm-3-a}) is the same as (\ref{c2-tm-3}),  (\ref{c2-tm-2-a}) the same as (\ref{c2-tm-2}), and on the right hand side of (\ref{eq-ft-7}), the first term $(n_1-m_1)$ represents the number of intervals of packets from $\bar{A}_1$,  the second term $(n_2-m_2)$ represents the number of intervals of packest from $\bar{A}_2$, and the third term represents that an additional interval needs to be added due to the superposition, all in $\{m, \dots, n\}_{\bar{A}}$. (See also the discussion for (\ref{eq-ft-3}).)

With (\ref{c2-tm-3-a}) and (\ref{c2-tm-2-a}), we now have,
\begin{eqnarray}
&& (\lambda_1 + \lambda_2) \cdot \bar{A}(m, n)  \nonumber \\
&=&  \lambda_1 \cdot (\bar{A}(n) - \bar{A}(m) ) +  \lambda_2 \cdot (\bar{A}(n) - \bar{A}(m) ) \nonumber \\
&\ge&  \lambda_1 \cdot (\bar{A_1}(n_1) - \bar{A_1}(m_1) ) + \lambda_2 \cdot (\bar{A}(n_2) - \bar{A}(m_2) )\nonumber \\
&\ge& (n_1 - m_1 -  \nu_1)^{+} +  (n_2 - m_2 -  \nu_2)^{+} \nonumber \\
&\ge& ((n_1 - m_1) -  \nu_1 +  (n_2 - m_2) -  \nu_2)^{+} \nonumber \\
&=&  (n - m-1 - \nu_1- \nu_2)^{+}
\end{eqnarray}
and hence
\begin{eqnarray}
\bar{A}(m, n)  &\ge& \frac{1}{\lambda_1 + \lambda_2}(n-m - \nu_1 - \nu_2 - 1)^{+}  = \bar{\alpha}(n-m). \nonumber 
\end{eqnarray}

Combing Case 2.1 - Case 2.3, (\ref{eq-lm-1-0}) is proved for the second case. With this, we have proved (\ref{eq-lm-1-0}) holds for all $n > m \ge 1$. 

{\bf Case 3: Customer $m$ is the virtual packet at the origin, i.e. $m=0$ and $A(0)=0$.}  In this case, in addition to the $n_1$ customers from $\bar{A}_1$, there are $n- n_1$ customers from $\bar{A}_2$ in the period, and we must also have 
$$\bar{A}_2(n-n_1) \le \bar{A}_1(n_1) = \bar{A}(n)$$  
with which, we further obtain
\begin{eqnarray}
&& (\lambda_1 + \lambda_2) \cdot \bar{A}(0, n)  \nonumber \\
&=&  \lambda_1 \cdot \bar{A}(n)  +  \lambda_2 \cdot \bar{A}(n) \nonumber \\
&\ge&  \lambda_1 \bar{A_1}(n_1) + \lambda_2 \cdot \bar{A}_2(n -n_1) \nonumber \\
&\ge& (n_1 -  \nu_1)^{+} +  ((n- n_1) -  \nu_2)^{+} \nonumber \\
&\ge& ((n_1 -  \nu_1) +  ((n- n_1) -  \nu_2))^{+}  = (n  - \nu_1- \nu_2)^{+} \nonumber 
\end{eqnarray}
and hence
\begin{eqnarray}
\bar{A}(0, n)  &\ge& \frac{1}{\lambda_1 + \lambda_2}((n-0) - \nu_1 - \nu_2 )^{+}   \ge \bar{\alpha}(n-0). \nonumber 
\end{eqnarray}
This, together with the proof for Case 1 and Case 2, ends the proof of Lemma \ref{lm-1}. 
\end{proof} 

\nop{
With the base case proved, i.e.  Lemma \ref{lm-1}, we now present the inductive step for Theorem \ref{th-1}, which is to prove that, given it holds for $I = I'$, it will also hold for $I=I' +1$. With the induction hypothesis, the superposition process of $I'$ arrival processes has a max-plus arrival curve as:
$$
\bar{\alpha'}(n) = \frac{1}{\lambda'} (n - \nu')^{+} 
$$
with $\lambda' = \sum_{i=1}^{I'} \lambda_i$ and $\nu' = \sum_{i=1}^{I'} \nu_i + I' -1$.

The superposition of $I'+1$ arrival processes is the superposition of two arrival processes, where one is the superposition process of the $I'$ arrival processes and the other is the $(I'+1)$-th arrival process. Then, applying Lemma \ref{lm-1}, we obtain that the superposition of $I'+1$ arrival processes has a max-plus arrival curve as:
 \begin{eqnarray}
 \bar{\alpha}(n) &=& \frac{1}{\lambda' + \lambda_{I' +1}} \cdot (n - \nu' - \nu_{I'+1} - 1)^{+}  \nonumber \\ \nonumber
 &=& \frac{1}{ \sum_{i=1}^{I'+1} \lambda_i} \cdot (n - \sum_{i=1}^{I'}\nu_i - (I' -1) - \nu_{I'+1}  -1)^{+} \\ \nonumber
  &=& \frac{1}{ \sum_{i=1}^{I'+1} \lambda_i} \cdot (n - \sum_{i=1}^{I'+1} \nu_i  - I')^{+}  \\ \nonumber
  &=& \frac{1}{ \sum_{i=1}^{I'+1} \lambda_i} \cdot (n - \sum_{i=1}^{I'+1} \nu_i  - ((I'+1) -1))^{+}  
\end{eqnarray}
which proves Theorem \ref{th-1}  for the inductive step. Together with Lemma \ref{lm-1}, which is the base step, we have proved Theorem \ref{th-1}.
}

Next for the induction, we prove Theorem \ref{th-1} also holds for $I + 1$ arrival processes, given the condition that it holds for $I$ arrival processes. 
Note that, under the given condition, the aggregate process of $I$ arrival processes is $(\lambda_{(I)},  \nu_{(I)})$-constrained with
 $
\lambda_{(I)} = \sum_{i=1}^{I} \lambda_i; 
\nu_{(I)} = \sum_{i=1}^{I}\nu_i + (I-1). 
 $

The aggregate process of $I+1$ arrival processes, denoted as $\bar{A}_{(I+1)}$, can be treated as the superposition of two processes $\bar{A}_{(I)}$ and $\bar{A}_{I+1}$, where $\bar{A}_{(I)}$ denotes the aggregate of the first $I$ processes and $\bar{A}_{I+1}$ the last process. Then, with Lemma \ref{lm-1}, $\bar{A}_{(I+1)}$ is $(\lambda_{(I+1)},  \nu_{(I+1)})$-constrained with
\begin{eqnarray}
\lambda_{(I+1)} &=& \lambda_{(I)}  +\lambda_{I+1} = \sum_{i=1}^{I+1} \lambda_i \nonumber \\
\nu_{(I+1)} &=& \nu_{(I)} + \nu_{I+1} +1 = \sum_{i=1}^{I+1}\nu_i + ((I+1)-1) \nonumber
\end{eqnarray}
which is Theorem \ref{th-1} for the superposition of  $I+1$ processes. This completes the proof of Theorem \ref{th-1} .

\subsection{Extensions}

It is worth highlighting that the superposition property of the $(\lambda, \nu)$  model presented in Theorem \ref{th-1} resembles that of the $(\sigma, \rho)$ model shown in Lemma \ref{lm-sr}. 

Following the essence in the proof of Theorem \ref{th-1}, the following superposition property can be proved for the TSN / DetNet traffic specification. 

\begin{corollary}\label{cor-tsn}
 Consider the superposition of $I (\ge 2)$ arrival processes $\bar{A}_i$, $i = 1, \dots, I$. If each arrival process $\bar{A}_i$ confirms to the TSN / DetNet traffic specification with interval $\tau_i$ and maximum packet number $K_i$, then the aggregate process $\bar{A}$ also confirms to the TSN / DetNet traffic specification with interval $\tau$ and maximum packet number $K$ where
 $$
\tau^{-1} = \sum_{i=1}^{I} \tau_i^{-1}; \qquad
K = \sum_{i=1}^{I}K_i  . 
 $$
 \end{corollary}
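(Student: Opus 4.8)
The plan is to mirror the two-step structure used for Theorem~\ref{th-1}: first prove the base case $I=2$ directly on the arrival time functions, then lift to general $I$ by the same induction that carries Lemma~\ref{lm-1} to Theorem~\ref{th-1} (treating the aggregate of the first $I$ processes as one process and the $(I{+}1)$-th as the other). Throughout I would work with the inter-arrival characterization of the specification already implicit in the proof of Theorem~\ref{th-0}(ii): an arrival process conforms to $(\tau_i,K_i)$ if and only if $\bar A_i(m,n)\ge \lfloor (n-m)/K_i\rfloor\,\tau_i$ for all $n\ge m\ge 0$, i.e. $\tau_i^{-1}\bar A_i(m,n)\ge \lfloor (n-m)/K_i\rfloor$. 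This is the exact TSN analogue of the bound $\lambda_i\bar A_i(m,n)\ge(n-m-\nu_i)^+$ that drives Lemma~\ref{lm-1}, with $\tau_i^{-1}$ playing the role of the additive ``rate'' $\lambda_i$ and the block count $\lfloor\cdot/K_i\rfloor$ playing the role of $(\cdot-\nu_i)^+$.

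For the base case I would reuse verbatim the case decomposition of Lemma~\ref{lm-1}: fix $n>m\ge 0$ in the aggregate, suppose packet $n$ comes from $\bar A_1$, and split according to whether packet $m$ is from $\bar A_1$ (Case~1), from $\bar A_2$ (Case~2), or is the virtual origin (Case~3), with the same sub-cases recording how many packets of the other process fall strictly between $m$ and $n$. Each case supplies exactly the packet-counting identity of Lemma~\ref{lm-1}, e.g. $n-m=(n_1-m_1)+(n_2-m_2)+1$ in the generic sub-case, together with the same nesting inequalities $\bar A(m,n)\ge\bar A_1(m_1,n_1)$ and $\bar A(m,n)\ge\bar A_2(m_2,n_2)$. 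Multiplying the aggregate gap by $\tau_1^{-1}+\tau_2^{-1}$, distributing it as $\tau_1^{-1}\bar A(m,n)+\tau_2^{-1}\bar A(m,n)$, and applying the per-process bound then yields $(\tau_1^{-1}+\tau_2^{-1})\,\bar A(m,n)\ge \lfloor (n_1-m_1)/K_1\rfloor+\lfloor (n_2-m_2)/K_2\rfloor$.

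It then remains to show this dominates $\lfloor (n-m)/(K_1+K_2)\rfloor$, which after substituting the case's counting identity reduces to the purely arithmetic claim $\lfloor a/K_1\rfloor+\lfloor b/K_2\rfloor\ge\lfloor (a+b+1)/(K_1+K_2)\rfloor$ for nonnegative integers $a,b$; the simpler sub-cases and Case~3 correspond to particular $a,b$ and to dropping the ``$+1$'', and all follow from this master inequality since the right side is monotone in $a+b$. I would prove the claim by contradiction: if its right side were at least $f+g+1$ with $f=\lfloor a/K_1\rfloor$, $g=\lfloor b/K_2\rfloor$, then $a+b+1\ge(f+g+1)(K_1+K_2)$, whereas $a\le (f+1)K_1-1$ and $b\le (g+1)K_2-1$ force $a+b+1\le fK_1+gK_2+K_1+K_2-1$; comparing the two gives $fK_2+gK_1+1\le 0$, impossible. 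This delivers $\bar A(m,n)\ge \lfloor (n-m)/(K_1+K_2)\rfloor\,\tau$ with $\tau=(\tau_1^{-1}+\tau_2^{-1})^{-1}$, i.e. conformance to $(\tau,K_1+K_2)$, and the induction then yields $\tau^{-1}=\sum_i\tau_i^{-1}$ and $K=\sum_iK_i$.

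I expect the main obstacle to be exactly this replacement of the real-valued $(\cdot)^+$ algebra of Lemma~\ref{lm-1} by integer floor arithmetic: the floor inequality must be checked in each case with the correct count of superposition-induced intervals, and the boundary bookkeeping flagged in the footnote of Lemma~\ref{lm-1} (concurrent arrivals at $\bar A(m)$ or $\bar A(n)$, and the virtual origin of Case~3) must be handled so that the counting identities hold exactly. I would also record a much shorter route that confirms the parameters are correct: since $\tau=(\sum_i\tau_i^{-1})^{-1}\le\min_i\tau_i$, any window of length $\tau$ is contained in a window of length $\tau_i$ for every $i$, hence holds at most $K_i$ packets of process $i$ and at most $\sum_iK_i$ in aggregate. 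This reproves the corollary (indeed a stronger version with $\tau=\min_i\tau_i$) in one line, while the floor-based argument above is the one that genuinely parallels the proof of Theorem~\ref{th-1}.
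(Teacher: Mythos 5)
Your proposal is correct, and it in fact contains two proofs. The paper gives no argument for Corollary~\ref{cor-tsn} beyond the remark that it ``follows the essence'' of the proof of Theorem~\ref{th-1}, and your floor-based argument is a faithful realization of that intention: it reruns the Case~1--Case~3 decomposition of Lemma~\ref{lm-1} with $\tau_i^{-1}$ in place of $\lambda_i$ and $\lfloor\cdot/K_i\rfloor$ in place of $(\cdot-\nu_i)^{+}$, and your master inequality $\lfloor a/K_1\rfloor+\lfloor b/K_2\rfloor\ge\lfloor(a+b+1)/(K_1+K_2)\rfloor$ (whose proof by contradiction checks out) is exactly the arithmetic fact that replaces the superadditivity of $(\cdot)^{+}$ used in Lemma~\ref{lm-1}. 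This direct rerun is genuinely needed to obtain the stated parameters: merely composing Theorem~\ref{th-0}(ii), Theorem~\ref{th-1} and Theorem~\ref{th-0}(i) yields only $\tau^{-1}=\sum_i K_i\tau_i^{-1}$, which is weaker than the corollary's $\tau^{-1}=\sum_i\tau_i^{-1}$ because $K_i\ge 1$. One caveat: your claimed equivalence between conformance to $(\tau_i,K_i)$ and $\bar{A}_i(m,n)\ge\lfloor(n-m)/K_i\rfloor\,\tau_i$ is only an implication in the forward direction; in the reverse direction the non-strict bound still permits $K+1$ packets in a closed window of length exactly $\tau$, the same open/closed boundary issue the paper itself handles with the $(\cdot)_{-}$ notation in Theorem~\ref{th-0}(i). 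Conformance actually yields the strict bound $\bar{A}_i(m,m+jK_i)>j\tau_i$, and carrying this strictness through your case analysis closes the gap. Finally, your ``much shorter route'' is a genuinely different and arguably preferable argument: since $\tau=(\sum_i\tau_i^{-1})^{-1}\le\min_i\tau_i$, any duration of length $\tau$ sits inside one of length $\tau_i$ for every $i$ and hence contains at most $K_i$ packets of process $i$, so the aggregate conforms with $\tau=\min_i\tau_i$ and $K=\sum_iK_i$ --- a strictly stronger conclusion than the corollary, immune to the boundary subtleties above. The floor argument buys a proof that parallels Theorem~\ref{th-1} and transfers to the $(\lambda,\nu)$ setting; the windowing argument buys brevity, robustness, and a sharper interval parameter.
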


In addition, the superposition property of the $(\lambda, \nu)$  model can be extended to the more general max-plus arrival curve model shown below. 

\nop{
\subsection{Tightness}
\begin{theorem}
For the superposition of $I$ arrival processes, if each of them is $(\lambda, \nu)$-constrained with $\bar{A}(n-m) =\frac{1}{\lambda_i}(n-m-\nu_i)^{+}$, where $\nu_i$ is a non-negative integer, then there exists an superposition case where for some $n \ge m - \nu -I +1$, $\bar{A}(n-m) = \frac{1}{\lambda}(n-m-\nu-I+1)^{+}$ for the aggregate process.
\end{theorem}
}

\begin{corollary}
Consider the superposition of $I (\ge 2)$ arrival processes $\bar{A}_i$, $i = 1, \dots, I$. If each of them has a max-plus arrival curve $\bar{\alpha}_i(\cdot) (\ge 0)$, $(i=1, \dots, I)$, then the superposition process $\bar{A}$ has a max-plus arrival curve $\bar{\alpha}$ as  
 $$
 \bar{\alpha}(n) = \frac{1}{\lambda} \cdot (n -\nu)^{+} 
 $$ 
where 
$$
\lambda = \sum_{i=1}^{I} \lambda_i; \qquad \nu = \sum_{i=1}^{I}  \nu_i + (I-1) 
$$
with 
\begin{eqnarray}
\lambda_i &=& \sup\{r: r \cdot \bar{\alpha}_i(n) \le n \} \label{lambda_i}\\
\nu_i &=& n - \lambda_i \bar{\alpha}_i(n) .  
\end{eqnarray}
\end{corollary}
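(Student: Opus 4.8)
The plan is to reduce this general statement to Theorem~\ref{th-1}, which already settles the special affine case. The bridge is the observation that possessing a max-plus arrival curve $\bar{\alpha}_i$ is, by the convolution definition, equivalent to the inter-arrival lower bound $\bar{A}_i(m,n) \ge \bar{\alpha}_i(n-m)$ for all $n \ge m \ge 0$: indeed $\bar{A}_i(n) \ge \bar{A}_i \bar{\otimes} \bar{\alpha}_i(n) = \sup_{0 \le k \le n}\{\bar{A}_i(k) + \bar{\alpha}_i(n-k)\} \ge \bar{A}_i(m) + \bar{\alpha}_i(n-m)$. Consequently, if each $\bar{\alpha}_i$ can be minorized by the affine curve $\frac{1}{\lambda_i}(\cdot - \nu_i)^{+}$, then each $\bar{A}_i$ is automatically $(\lambda_i,\nu_i)$-constrained, and the whole statement follows by applying Theorem~\ref{th-1} to these affine constraints.

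Thus the first and only substantive step is to verify that the constants in (\ref{lambda_i}) deliver a valid affine minorant, i.e. that
$$
\bar{\alpha}_i(n) \ge \frac{1}{\lambda_i}\,(n-\nu_i)^{+} \qquad \text{for all } n \ge 0.
$$
Reading (\ref{lambda_i}) as a requirement over all arguments, $\lambda_i$ is the largest rate for which $\lambda_i\,\bar{\alpha}_i(n) \le n$ holds for every $n$, equivalently $\lambda_i = \inf_{n \ge 1} n/\bar{\alpha}_i(n)$, while $\nu_i$ is the supremal slack $n - \lambda_i\,\bar{\alpha}_i(n)$. With these choices $n - \nu_i \le \lambda_i\,\bar{\alpha}_i(n)$ for every $n$, which together with $\bar{\alpha}_i \ge 0$ gives exactly the displayed minorant. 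As a consistency check, one notes that when $\bar{\alpha}_i$ is itself the pure affine curve $\frac{1}{\lambda_i}(n-\nu_i)^{+}$, the formulas reproduce the original $\lambda_i$ and $\nu_i$, so the corollary specializes correctly to Theorem~\ref{th-1}.

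With the minorant in hand each $\bar{A}_i$ is $(\lambda_i,\nu_i)$-constrained, so Theorem~\ref{th-1} immediately yields that the aggregate $\bar{A}$ is $(\lambda,\nu)$-constrained with $\lambda = \sum_{i=1}^{I}\lambda_i$ and $\nu = \sum_{i=1}^{I}\nu_i + (I-1)$; re-expressing this as an inter-arrival bound shows $\bar{A}$ has the claimed max-plus arrival curve $\bar{\alpha}(n) = \frac{1}{\lambda}(n-\nu)^{+}$. I expect the main obstacle to be the minorant step rather than the aggregation: one must ensure the affine lower bound holds uniformly in $n$ (not merely at a single reference point), and handle the degenerate regimes where $\bar{\alpha}_i$ vanishes or grows faster than linearly, which is precisely where the well-definedness and positivity of $\lambda_i$ and the non-negativity and finiteness of $\nu_i$ need to be argued. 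Once that is settled, the passage to the aggregate is a direct and routine invocation of Theorem~\ref{th-1}.
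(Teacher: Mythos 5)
Your proposal is correct and follows essentially the same route as the paper: replace each $\bar{\alpha}_i$ by the affine curve $\frac{1}{\lambda_i}(\cdot-\nu_i)^{+}$ determined by (\ref{lambda_i}) and invoke Theorem~\ref{th-1}. If anything, you are more careful than the paper's own one-line proof, which simply ``re-writes'' $\bar{\alpha}_i$ as the affine curve, whereas you correctly observe that it suffices for the affine curve to be a uniform minorant of $\bar{\alpha}_i$ (so that each $\bar{A}_i$ is $(\lambda_i,\nu_i)$-constrained), and you flag the degenerate regimes where $\lambda_i$ or $\nu_i$ could fail to be well-defined, which the paper leaves implicit.
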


\begin{proof}
Under the given assumptions, we can re-write $\bar{\alpha}_i(n)$ as 
$$
\bar{\alpha}_i(n) = \frac{1}{\lambda_i} (n - \nu_i)^{+}. 
$$
Then, the result follows immediately from Theorem \ref{th-1} \footnote{With (\ref{lambda_i}), i.e. the definition of $\lambda_i$, $\nu_i $ is non-negative in nature.}. 
\end{proof}

Furthermore, it is worth highlighting that, with the help of Theorem \ref{th-0} and Theorem \ref{th-1}, the existing NC results can be made use of for delay guarantee analysis of TSN / DetNet.

\section{Comparison}\label{sec-4}

In this section, we compare the superposition results obtained using the indirect approach and those using the direct approach proposed in the literature (see e.g., \cite{Chang00} \cite{Jiang-mascots09}). 

Specifically, the indirect approach first transforms the $(\lambda,  \nu)$ characterization from the arrival time function to the $(\sigma, \rho)$ traffic characterization, then applies the superposition property of the $(\sigma, \rho)$ model to find the $(\sigma, \rho)$ characterization for the aggregate process, and finally transforms the obtained  $(\sigma, \rho)$ characterization back to the the $(\lambda,  \nu)$ characterization. 

The following lemma summarizes the result from the indirect approach. Its proof is omitted, since a general but much more complex form can be found from Corollary 6.2.9 in \cite{Chang00}. 

\begin{lemma}\label{lm-indirect}
Consider the superposition of $I (\ge 2)$ arrival processes $\bar{A}_i$, $i = 1, \dots, I$. If each $\bar{A}_i$ is $(\lambda_i, \nu_i)$-constrained with maximum packet length $l_i$ and the minimum packet length of all processes is known, denoted as $l$, then the aggregate process $\bar{A}$ is $(\lambda^{ind.},  \nu^{ind.})$-constrained with
 $$
\lambda^{ind.} = \sum_{i=1}^{I} \frac{l_i}{l} \lambda_i; \qquad
\nu^{ind.} = \sum_{i=1}^{I}(\nu_i +1) \frac{l_i}{l}.
 $$
\end{lemma}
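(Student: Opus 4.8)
The plan is to follow the three-stage indirect route sketched in the surrounding text: convert each per-flow max-plus $(\lambda_i,\nu_i)$ constraint into a min-plus $(\sigma_i,\rho_i)$ constraint on the bit-counting function $A_i(t)$, invoke the min-plus superposition property (Lemma~\ref{lm-sr}) to obtain a $(\sigma,\rho)$ bound on the aggregate $A(t)$, and finally convert that bound back into a max-plus $(\lambda,\nu)$ constraint on the aggregate arrival time function $\bar{A}(n)$. The packet-length parameters $l_i$ and $l$ enter precisely at the two conversion steps, which is why they are hypotheses of the lemma.

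First I would establish the forward conversion for a single flow. Fixing an interval of length $T=t-s$, let $N$ be the number of packets of flow $i$ arriving in it; if these are packets $m+1,\dots,m+N$, then $\bar{A}_i(m+1,m+N)\le T$, while the $(\lambda_i,\nu_i)$ constraint gives $\bar{A}_i(m+1,m+N)\ge \frac{1}{\lambda_i}(N-1-\nu_i)^{+}$. Combining yields $N\le \lambda_i T+\nu_i+1$, and multiplying by the maximum packet length $l_i$ (an upper bound on the bits each packet contributes) gives $A_i(s,t)\le l_i\lambda_i T + l_i(\nu_i+1)$, i.e. flow $i$ is $(\sigma_i,\rho_i)$-constrained with $\rho_i=l_i\lambda_i$ and $\sigma_i=l_i(\nu_i+1)$. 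Lemma~\ref{lm-sr} then yields $\rho=\sum_i l_i\lambda_i$ and $\sigma=\sum_i l_i(\nu_i+1)$ for the aggregate.

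The backward conversion is dual. For aggregate packets $m+1,\dots,n$, the bits they carry are at least $(n-m)l$ because $l$ lower-bounds every packet length, so $A(\bar{A}(m),\bar{A}(n))\ge (n-m)l$; on the other hand the aggregate $(\sigma,\rho)$ bound gives $A(\bar{A}(m),\bar{A}(n))\le \rho\,\bar{A}(m,n)+\sigma$. Chaining the two inequalities and using $\bar{A}(m,n)\ge 0$ produces $\bar{A}(m,n)\ge \frac{l}{\rho}(n-m-\sigma/l)^{+}$, which is exactly a $(\lambda,\nu)$ constraint with $\lambda=\rho/l=\sum_i \frac{l_i}{l}\lambda_i$ and $\nu=\sigma/l=\sum_i \frac{l_i}{l}(\nu_i+1)$, matching $\lambda^{ind.}$ and $\nu^{ind.}$.

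The only genuinely delicate points are bookkeeping ones rather than a deep obstacle (consistent with the proof being citable to Corollary~6.2.9 in \cite{Chang00}). The ``$+1$'' in $\sigma_i=l_i(\nu_i+1)$ must be tracked carefully: it arises because $N$ packets inside an interval span only $N-1$ inter-arrival gaps, so the max-plus bound on gaps translates into a packet count of $\lambda_i T+\nu_i+1$ rather than $\lambda_i T+\nu_i$. The other subtlety is the deliberate asymmetry in which length is used where---the per-flow maximum $l_i$ bounds bits from above in the forward step, whereas the global minimum $l$ bounds bits from below in the backward step. This mismatch is unavoidable in the indirect route and is exactly the source of the looseness, and of the requirement that packet-length information be available, that motivates the length-free direct approach of Theorem~\ref{th-1}.
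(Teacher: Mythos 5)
Your proof is correct and follows exactly the route the paper intends: the paper omits the proof of this lemma (deferring to Corollary~6.2.9 of Chang), but the surrounding text describes precisely your three-stage conversion $(\lambda_i,\nu_i)\to(\sigma_i,\rho_i)\to(\sigma,\rho)\to(\lambda^{ind.},\nu^{ind.})$, and your bookkeeping of the ``$+1$'' and of where $l_i$ versus $l$ enters reproduces the stated constants. The only caveat, which the paper itself acknowledges in another proof, is the boundary convention for simultaneous arrivals at the interval endpoints; this does not affect the substance of the argument.
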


Comparing Lemma \ref{lm-indirect} with Theorem \ref{th-1}, in addition to how their results are derived, there are {\em two fundamental differences}: 
\begin{itemize}
\item For Lemma \ref{lm-indirect} to be applicable, we at least need to know the maximum packet length of each process and the minimum packet length of all processes. In contrast, no specific packet length information is required for Theorem~\ref{th-1}. This difference has an immediate consequence, which is, if the packet length information is not known or provided, the superposition result presented in Lemma \ref{lm-indirect}  can no more be used. 

\item Even when the needed packet length information condition for Lemma \ref{lm-indirect} to be applicable is available, its resultant $(\lambda,  \nu)$ representation is worse than what is from Theorem \ref{th-1}. This is because $\lambda^{ind.} \ge \lambda^{dir.}$ and $\nu^{ind.} > \nu^{dir.}$ leading to a smaller or worse bounding function $\lambda^{-1} (n-m-\nu)$ in the $(\lambda,  \nu)$ characterization. 
\end{itemize}

In the rest, we presents results for four extremely simple cases to exemplify the comparison. For simplicity in the expression, we assume every flow $i$ produces packets periodically and the period length is $\tau_i$. In addition, for ease of expression, we consider the superposition of only two flows, i.e. $I=2$. 

The other settings of the four cases are:
\begin{itemize}
\item Case 1: All flows have the same period $\tau_i=\tau$.
\item Case 2: All flows have the same period $\tau_i=\tau$ and the same packet length $l_i=l$. 
\item Case 3: All flows still have the same packet length $l_i=l$, but while  one flow has period $\tau_1=\tau$,  the other flow has period $\tau_2=2\tau$. 
\item Case 4: All other settings are the same as for the second case, except that the second flow has packet length $l_2=2l$. As a remark, in this case, the average traffic rate (in bps) of the second flow is the same as that of the first flow, i.e. $\rho_1=\rho_2 = l/\tau$.
\end{itemize}

Table \ref{tb-comp} summarizes and compares the superposition results from both approaches for the four cases. Though simple, the comparison validates the discussion about the  fundamental differences between the indirect and direct approaches. 

\begin{table}[t!] 
\caption{Comparison of superposition property results}
\label{tb-comp} 
\begin{center}
\begin{tabular}{ |c||c|c| } 
\hline
Cases: & Indirect Appr. (Lemma  \ref{lm-indirect})  & Direct Appr. (Theorem \ref{th-1}) \\ 
 \hline
Case 1 & Not Available & $\frac{\tau}{2}(n - 1)^{+}$ \\ 
 \hline
Case 2 & $\frac{\tau}{2}(n - 2)^{+}$ & $\frac{\tau}{2}(n - 1)^{+}$ \\ 
 \hline
Case 3 & $\frac{2\tau}{3}(n - 2)^{+} $ & $\frac{2\tau}{3}(n - 1)^{+}$ \\ 
 \hline
Case 4 & $\frac{\tau}{2}(n - 3)^{+}$  & $\frac{2\tau}{3}(n - 1)^{+}$ \\
 \hline
\end{tabular}%
\end{center}
 \end{table}

\section{Conclusion}\label{sec-5}
The emerging time-sensitive networking (TSN) and deterministic networking (DetNet) standards (re-)call attention to the network calculus, in order to make use of the rich set of results available in NC. In this paper, we introduced an arrival time function based max-plus NC traffic model. We proved that it is closely related to the TSN TSpec and there is a direct mapping between them. In addition, another focus has been on finding and proving the superposition property of the max-plus traffic model, providing answer to a long-standing question in the max-plus network calculus. The proof adopted a novel direct approach that requires no packet length information, in contrast to a literature indirect approach. Appealingly, the proved superposition property shows clear analogy with that of the well-known counterpart $(\sigma, \rho)$ model in NC. The comparison of the superposition results from the indirect and direct approaches not only shows wider applicability of the superposition property obtained in this paper, but also offers better traffic characterization for the aggregate process. These results can help  make use of the NC results for delay guarantee analysis of TSN / DetNet networks.

\section*{Acknowledgment}
This is an updated version. The initial version of this paper was submitted to IEEE Globecom 2018 and will be presented there. The author would like to thank its anonymous reviewers for their helpful comments, and Jean-Yves Le Boudec for similar comments. It is mainly based on those comments that this updated version has been produced. In addition, the author would like to specially thank Jean-Yves Le Boudec for pointing out that there is an equivalent model of the $(\lambda, \nu)$ model, which is called ``packet burstiness'' constraint PB$(\rho, K)$ independently introduced in \cite{LeBoudec18}, and that based on the PB model and results in \cite{LeBoudec18}, a simplified proof of the superposition property for the $(\lambda, \nu)$ model may be obtained. 

\bibliographystyle{unsrt}
\bibliography{nc-qt}

\begin{thebibliography}{10}

\bibitem{TSN-802.1Qcc}
{Time-Sensitive Networking Task Group of IEEE 802.1}.
\newblock {IEEE P802.1Qcc/D1.6}.
\newblock July 18, 2017.

\bibitem{IETF-DetNet}
N.~Finn, P.~Thubert, B.~Varga, and J.~Farkas.
\newblock Deterministic networking architecture.
\newblock {\em draft-ietf-detnet-architecture-04}, October 30, 2017.

\bibitem{Cruz91a}
R.~L. Cruz.
\newblock A calculus for network delay, part {I}: network elements in
  isolation.
\newblock {\em IEEE Trans. Information Theory}, 37(1):114--131, Jan. 1991.

\bibitem{Cruz91b}
R.~L. Cruz.
\newblock A calculus for network delay, part {II}: network analysis.
\newblock {\em IEEE Trans. Information Theory}, 37(1):132--141, Jan. 1991.

\bibitem{Chang00}
C.-S. Chang.
\newblock {\em Performance Guarantees in Communication Networks}.
\newblock Springer-Verlag, 2000.

\bibitem{NetCal}
J.-Y. {Le Boudec} and P.~Thiran.
\newblock {\em Network Calculus: A Theory of Deterministic Queueing Systems for
  the Internet}.
\newblock Springer-Verlag, 2001.

\bibitem{Sigcomm06}
Y.~Jiang.
\newblock A basic stochastic network calculus.
\newblock In {\em Proc. ACM SIGCOMM 2006}, pages 123--134, 2006.

\bibitem{SNC}
Yuming Jiang and Yong Liu.
\newblock {\em Stochastic Network Calculus}.
\newblock Springer-Verlag, 2008.

\bibitem{Jiang-GL09}
Jing Xie and Yuming Jiang.
\newblock Stochastic network calculus models under max-plus algebra.
\newblock In {\em IEEE GLOBECOM}, 2009.

\bibitem{Jiang-mascots09}
Jing Xie and Yuming Jiang.
\newblock Stochastic service guarantee analysis based on time-domain models.
\newblock In {\em Proc. 17th International Symposium on Modelling, Analysis and
  Simulation of Computer and Telecommunication Systems (MASCOTS)}, 2009.

\bibitem{Jiang-valuetools11}
Jing Xie and Yuming Jiang.
\newblock A temporal network calculus approach to service guarantee analysis of
  stochastic networks.
\newblock In {\em Proc. Valuetools}, 2011.

\bibitem{Liebeherr17}
J.~Liebeherr.
\newblock Duality of the max-plus and min-plus network calculus.
\newblock {\em Foundations and Trends in Networking}, 11(3-4):139--282, 2017.

\bibitem{LeBoudec18}
Jean{-}Yves~Le Boudec.
\newblock A theory of traffic regulators for deterministic networks with
  application to interleaved regulators.
\newblock {\em CoRR}, abs/1801.08477, 2018.

\end{thebibliography}

\end{document}